\newcommand{\xty}[2]{\ensuremath{#1}-to-\ensuremath{#2}}
\newcommand{\cvec}{\bm{c}}
\newcommand{\dvec}{\bm{d}}
\newcommand{\fvec}{{\bm{f}}}
\newcommand{\cw}{c.w.\ }
\newcommand{\ccw}{c.c.w.\ }
\newcommand{\wrt}{w.r.t.\ }
\newcommand{\rev}[1]{\ensuremath{\mbox{\it rev}\,(#1)}}
\newcommand{\head}{\ensuremath{\mathit head}}
\newcommand{\tail}{\ensuremath{\mathit tail}}
\newtheorem{invariant}{Invariant}
\newtheorem{observation}{Observation}
\newcommand{\keywords}[1]{\par\addvspace\baselineskip
\noindent\keywordname\enspace\ignorespaces#1}
\begin{document}

\mainmatter 

\title{Boundary-to-boundary flows in planar graphs}

\author{Glencora Borradaile\inst{1} \and Anna Harutyunyan\inst{2,}\thanks{Work done while at Oregon State University.}}
\institute{Oregon State University \and Vrije Universiteit
  Brussel}

% \author{Glencora Borradaile\thanks{School of Electrical Engineering
%     and Computer Science, Oregon State University.} \and Anna
%   Harutyunyan\thanks{Department of Computer Science, Vrije
%     Universiteit Brussel, Belgium. Work done while at Oregon State University.}}

\maketitle

\begin{abstract}
  We give an iterative algorithm for finding the maximum flow between
  a set of sources and sinks that lie on the boundary of a planar
  graph.  Our algorithm uses only $O(n)$ queries to simple data
  structures, achieving an $O(n \log n)$ running time that we expect to be
  practical given the use of simple primitives.  The only existing algorithm
  for this problem uses divide and conquer and, in order to achieve an
  $O(n \log n)$ running time, requires the use of the (complicated)
  linear-time shortest-paths algorithm for planar graphs.

  \keywords{maximum flow, multiple terminal, planar graphs}
\end{abstract}

\section{Introduction}
\setcounter{footnote}{0}
The problem of finding maximum flow in planar graphs has a long history, starting with the
work of Ford and Fulkerson~\cite{FF56} in which the Max-flow, Min-cut
Theorem was proved and the augmenting-paths algorithm was introduced.
Since then, algorithms for maximum flow in planar graphs have fallen
into one of three paradigms: augmenting paths, divide and conquer
using small balanced planar separators, or via shortest paths in the dual.  We
note a subset of these results that are relevant to this paper.
Borradaile and Klein gave an augmenting-paths algorithm for maximum
$st$-flow in directed planar graphs that uses dynamic trees to achieve
an $O(n \log n)$ running time~\cite{BK09}.  For the special case when
$s$ and $t$ are on the same face, an augmenting-paths algorithm can be
simulated via Dijkstra's algorithm or, equivalently, determined from
shortest-path distances in the dual graph~\cite{Hassin81} (details in
Section~\ref{sec:max-flow-shortest}).  Borradaile et~al.\ gave a
rather complicated $O(n \log^3 n)$-time divide-and-conquer algorithm
for when there are multiple sources and sinks (not necessarily on a
common face)~\cite{BKMNW11}. For the special case when these sources
and sinks are all on a common face\footnote{Note that there is no planarity-maintaining
  reduction from this case to the single-source, single-sink
  case.} (such as the boundary of the
embedded graph), Miller and Naor gave a simpler divide-and-conquer
algorithm~\cite{MN95}.

In this work we give an iterative algorithm for this last {\em boundary-to-boundary}
case.  While our algorithm does not improve on the asymptotic running
time of Miller and Naor's work, in order for Miller and Naor's
algorithm to be implemented in $O(n \log n)$ time, one requires repeated applications of 
the linear-time shortest-paths algorithm of Henzinger et
al.~\cite{HKRS97}.  This shortest-paths algorithm is arguably
impractical: it is also a divide-and-conquer algorithm using small
planar separators, involves `large constants' and, to our knowledge,
has not been implemented.  Our algorithm, on the other hand, requires
just $O(n)$  (with a small constant) queries to simple data structures:
namely a priority queue and a linked list~\cite{DS87}.

Our algorithm is an augmenting-paths algorithm that iterates
over the source-sink pairs.  We simulate finding the flow between a
given source and sink using Hassin's method -- via Dijkstra's algorithm in the dual graph.  In order to prevent searching the same region of the
graph multiple times, we search the graph in a biased way~\cite{GH05},
such that we need only reuse the boundary of the searched region for
augmenting further source-sink pairs.  In order to reuse these
boundaries efficiently, we use a simple generalization of priority
queues in which queues are merged whose relative priorities differ by a
constant or {\em offset}.  These {\em offset queues} are implemented using edge weights to encode the offset in a tree implementation of the heap;  doing so does not affect the asymptotic running time of the basic priority queue operations.  Details are given in Appendix~\ref{app:pq}.

We believe that the methods used in this paper may be applicable to
other planar flow problems.  For example, in a companion paper~\cite{BH13}, we
argue that the augmenting-paths algorithm of Borradaile and Klein for
maximum $st$-flow in directed planar graphs can also be simulated by
Dijkstra in the dual graph; the details of the implementation in this
paper may lead to an $O(n \log n)$ algorithm for maximum $st$-flow in
directed planar graphs that does not require the more cumbersome
dynamic-trees data structure.

\subsection{Definitions}

We give a brief outline of definitions where we may stray from
convention. For more complete and formal definitions, please refer to
Borradaile's dissertation~\cite{Borradaile-thesis}. We extend any function or property on elements to sets of elements in
the natural way.

Our algorithms are for directed graphs, but we consider the underlying
undirected graph where each edge has two oppositely directed darts.  Darts are oriented from {\em tail} to {\em head}.
Capacities, $\cvec$, on the darts are positive and asymmetric, reflecting  the
original directed problem.  Paths and cycles are sequences of darts
and so are naturally directed; a path or a cycle may visit the same vertex multiple times; those that
do not are {\em simple}; a path may be trivial, in which case it
is a vertex. $X[a,b]$ denotes the $a$-to-$b$ subpath of $X$ where $X$
is a path, cycle or tree;  $\circ$ denotes the concatenation of paths (which may result in a cycle).

A flow $\fvec$ is an assignment of  real numbers to darts that is antisymmetric (for a dart and its reverse),
respects capacities and is balanced at all non-terminal (non-source,
sink) vertices.  The value $|\fvec|$ of a flow is the net flow
entering the sinks.  A flow is a circulation if there are no
terminals.  The residual capacities $\cvec_f$ of capacities $\cvec$
\wrt flow $\fvec$ are given by:
\begin{equation}
  \label{eq:res}
  \cvec_\fvec[d] = \cvec[d]-\fvec[d],\ \forall \text{darts } d
\end{equation}
A path or cycle $X$ is residual if the residual capacity of every dart in $X$ is strictly
positive.  A dart is saturated if its residual capacity is zero.
Residuality is \wrt capacities (such as $\cvec$ or $\cvec_\fvec$).

An $xy$-cut in $G$ is a set of darts $C$, the removal of which leaves
no \xty{x}{y} paths. The value of a cut is the total capacity of its darts. The
value of the {\em minimum} $xy$-cut equals to that of the maximum $xy$-flow~\cite{FF56}.

We use the usual definitions for planar graphs and their duals.  We
denote any path, cycle, vertex, face, dart in the dual graph with a
$*$-superscript.  If $d$ is a dart in $G$, then $d^*$ is the
corresponding dual dart; if $v$ is a vertex and $f$ is a face in $G$,
$v^*$ is a {\em face} and $f^*$ is a {\em vertex} in $G^*$.  The boundary of the
graph is denoted $\partial G$ and is taken to be clockwise.  We refer
to simple cycles as being clockwise (c.w.) or counterclockwise
(c.c.w.); \cw and \ccw depend on the choice of infinite face, $f_{\infty}$, which,
throughout this paper, we will take to be the face common to all the
sources and sinks.

For two non-crossing $x$-to-$y$ paths $P$ and $Q$, we say $P$ is left
of $Q$ if $P \circ \rev{Q}$ is c.w. A path is leftmost if there
are no paths left of it. For an $x$-to-$y$ path $P$ that
starts and ends on $\partial G$, we say a face, edge, path, etc.\ $X$
is (strictly) left of $P$ if $X$ is (strictly) contained by the \cw
cycle $\partial G[x,y] \circ \rev{P}$.  We say that a planar flow
$\fvec$ is {\em leftmost} if every \cw cycle is non-residual
\wrt $\cvec_\fvec$.  We say that capacities are \cw acyclic if every
\cw cycle is non-residual \wrt the capacities.

\section{Leftmost maximum flows and shortest paths}
\label{sec:max-flow-shortest}
Khuller, Naor and Klein~\cite{KNK93} showed that a flow that is
derived from shortest-path distances in the dual is \cw acyclic.  Formally:
\begin{theorem}[Clockwise acyclic flows] \label{thm:cwsat}
  Let $\dvec$ be the shortest-path distances in $G^*$ from $f_\infty^*$
  interpreting capacities as lengths.  Then every \cw cycle is non-residual \wrt the flow
  \begin{equation}
    \fvec[d] =
    \dvec[\head(d^*)]-\dvec[\tail(d^*)] \ \forall \text{ darts }
    d \label{eq:st-flow}
  \end{equation}
  where $\head(d^*)$ and $\tail(d^*)$ are the head and tail vertices of $d^*$ in $G^*$.
\end{theorem}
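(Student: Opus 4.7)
The plan is to verify that $\fvec$ is well-defined as a flow and then show that every clockwise cycle contains a saturated dart with respect to $\cvec$. Antisymmetry of $\fvec$ follows immediately from the formula since $\rev{d}^* = \rev{d^*}$ swaps head and tail, and flow conservation at a primal vertex $v$ follows because summing $\fvec[d]$ over the darts bounding the dual face $v^*$ telescopes to $0$. The inequality $\fvec[d]\le \cvec[d]$ is just the triangle inequality for shortest-path distances along $d^*$, since $\cvec$ doubles as the dual edge length. So the only real content is showing that every clockwise cycle has a saturated dart.

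For the main claim, fix a clockwise cycle $C$ in $G$. Because $f_\infty$ lies outside $C$, the set of primal faces strictly enclosed by $C$ corresponds to a nonempty set $S$ of dual vertices, and the dual darts $\{d^* : d \in C\}$ form a directed cut in $G^*$ separating $f_\infty^*$ from $S$. With the standard orientation convention, every such $d^*$ points from the exterior side into $S$ (since $d$ has the interior of $C$ on its right). Thus to reach $S$ from $f_\infty^*$, any path in $G^*$ must use at least one of the darts $d_1^*,\ldots,d_k^*$.

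Now consider the shortest-path tree $T$ rooted at $f_\infty^*$ and used to define $\dvec$. Pick any $v^*\in S$ and walk along the tree path from $f_\infty^*$ to $v^*$; this path necessarily crosses the cut, so let $u^*\to w^*$ be its first tree dart with $u^*$ outside $S$ and $w^*$ inside. By the above, this dart equals $d_i^*$ for some $d_i\in C$. Being a tree dart gives the tight equality $\dvec[w^*]=\dvec[u^*]+\cvec[d_i^*]$, and since $\cvec[d_i^*]=\cvec[d_i]$, the definition of $\fvec$ yields
\[
\fvec[d_i]=\dvec[\head(d_i^*)]-\dvec[\tail(d_i^*)]=\dvec[w^*]-\dvec[u^*]=\cvec[d_i],
\]
so $d_i$ is saturated and $C$ is non-residual.

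The step I expect to require the most care is not an intricate argument but nailing down orientation conventions: making sure that, for a clockwise cycle in $G$ (taken with respect to $f_\infty$), the dual darts $d^*$ really do cross the cut in a consistent direction into the bounded region, so that they appear exactly as they are (not reversed) in the shortest-path tree emanating from $f_\infty^*$. Once that convention is fixed, the rest is just a telescoping computation and the observation that a tree edge of a shortest-path tree witnesses tightness in the triangle inequality, which is what makes the corresponding primal dart saturated.
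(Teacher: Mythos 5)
The paper does not prove Theorem~\ref{thm:cwsat}; it states it as a citation to Khuller, Naor, and Klein, so there is no in-paper proof to compare against. Your argument is a correct, self-contained proof. The feasibility checks are all right: antisymmetry comes from swapping head and tail of $d^*$, conservation at a primal vertex $v$ is a telescoping sum around the boundary of the dual face $v^*$, and the capacity constraint is exactly the triangle inequality for shortest paths. The heart of the argument, that for a simple clockwise cycle $C$ the dual darts $\{d^* : d \in C\}$ are precisely the darts crossing from the complement of the enclosed face-set $S$ into $S$, so the first such crossing on any shortest-path-tree branch from $f_\infty^*$ to a vertex of $S$ is a tree dart whose tightness in the triangle inequality saturates the corresponding primal dart, is exactly the right observation and is essentially the KNK93 argument. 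You are right that the only delicate point is the orientation convention; the paper's convention (clockwise boundary, $f_\infty$ outside, $d^*$ directed from the left face of $d$ to the right face) makes a clockwise cycle's dual darts all point inward, as you need. One small point worth stating explicitly: the claim that the only dual darts entering $S$ are those of $C$ relies on $C$ being simple, which is fine here since the paper only assigns an orientation (c.w.\ or c.c.w.) to simple cycles, but it should be flagged as an assumption.
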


Earlier, Hassin had used this idea to find a maximum $st$-flow in an
$st$-planar graph~\cite{Hassin81}.  We can view his algorithm by turning it into a
circulation problem: introduce a new infinite-capacity arc $ts$
embedded so that every $s$-to-$t$ residual path forms a \cw
cycle with $ts$ and then saturate the \cw cycles.  We describe
an equivalent formulation which we use in this paper.  Split the dual
vertex $f_\infty^*$ into two vertices $a_\infty^*$ and $b_\infty^*$
such that all the darts in $\partial G[s,t]^*$ are incident to
$a_\infty^*$ and all the darts in $\partial G[t,s]^*$ are incident to
$b_\infty^*$; denote the resulting graph $G_{st}^*$.  Let $\dvec[x^*]$
be the shortest-path distance from $a_\infty^*$ to $x^*$ in $G^*_{st}$,
viewing capacities as lengths.  Then the flow assignment $\fvec_{st}$
for $G$ given as in Equation~(\ref{eq:st-flow}) is a maximum
$st$-flow.  It follows directly from Theorem~\ref{thm:cwsat} that
$\fvec_{st}$ is the leftmost maximum $st$-flow.

Since simple cuts in the primal map to simple cycles in the
dual (and vice versa)~\cite{Whitney1933}, the darts of an $st$-cut $C$ form an
$a_\infty^*$-to-$b_\infty^*$ path $C^*$ in $G_{st}^*$.  If $C$ is a
minimum cut, $C^*$ is a shortest path.

\begin{observation}\label{obs:leftmost-acyclic}
  A leftmost flow \wrt \cw acyclic residual capacities is acyclic.~\cite{BK09}
\end{observation}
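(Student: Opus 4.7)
The plan is to argue by contradiction. Suppose $\fvec$ is a leftmost flow relative to clockwise acyclic capacities $\cvec$, yet the support of $\fvec$ (darts $d$ with $\fvec[d] > 0$) contains a directed cycle. I would extract a simple such cycle and then separate into cases on its planar orientation relative to the fixed $f_\infty$.

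First I would reduce to a simple cycle: a closed directed walk in the support decomposes, by the standard flow decomposition applied to $\fvec$ restricted to its support, into simple directed cycles, each of whose darts still carries strictly positive flow. So it suffices to take $C$ to be a simple directed cycle with $\fvec[d] > 0$ for every dart $d \in C$. Being simple in a planar embedding, $C$ is either clockwise or counterclockwise.

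If $C$ is clockwise, then $0 < \fvec[d] \leq \cvec[d]$ on every $d \in C$ forces $\cvec[d] > 0$, so $C$ itself is a clockwise residual cycle with respect to $\cvec$, contradicting the clockwise acyclicity of $\cvec$. If instead $C$ is counterclockwise, then $\rev{C}$ is clockwise; using antisymmetry of $\fvec$, on each dart of $\rev{C}$ we get $\cvec_\fvec[\rev{d}] = \cvec[\rev{d}] - \fvec[\rev{d}] = \cvec[\rev{d}] + \fvec[d] > 0$, so $\rev{C}$ is a clockwise residual cycle with respect to $\cvec_\fvec$, contradicting the leftmostness of $\fvec$ (which requires every clockwise cycle to be non-residual under $\cvec_\fvec$).

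I expect no serious obstacle; the argument is essentially a two-case check. The only points that merit care are the flow-decomposition reduction to a simple cycle with strictly positive flow on every dart, and being explicit in each case about which set of capacities (original $\cvec$ in the clockwise case, residuals $\cvec_\fvec$ in the counterclockwise case) furnishes the contradiction.
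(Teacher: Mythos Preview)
Your argument is correct. The paper does not actually prove this observation; it simply cites it from~\cite{BK09} (Borradaile and Klein), so there is no in-paper proof to compare against. Your two-case contradiction---a clockwise flow cycle would witness residuality of a clockwise cycle under the original capacities $\cvec$, while a counterclockwise flow cycle would, upon reversal, witness residuality of a clockwise cycle under $\cvec_\fvec$---is exactly the standard argument for this fact and matches what one finds in the cited source.
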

Because of this acyclicity, one can easily show:
\begin{observation}\label{obs:path-decomp} 
  Let $\cvec$ be \cw acyclic capacities and let $\fvec$ the leftmost,
  max $st$-flow for $s$ and $t$ on $f_\infty$. Then there is a
  decomposition of $\fvec$ into unique, non-crossing $s$-to-$t$ paths
  $P_1, P_2,\ldots, P_\ell$ where $P_i$ carries $f_i > 0$ units of flow
  and $P_i$ is left of $P_j$ $\forall i < j$.  Further, an
  augmenting-paths algorithm that always saturates the leftmost path
  first saturates the paths $P_1, \ldots, P_\ell$ in order.
\end{observation}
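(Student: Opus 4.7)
My plan is to first set up the canonical leftmost decomposition (existence, non-crossing, uniqueness), and then show by induction on $i$ that the iterative leftmost-augmenting algorithm recovers exactly $P_1,\ldots,P_\ell$ in that order.

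For the decomposition: since $\fvec$ is leftmost, $\cvec_\fvec$ is \cw acyclic by definition, so Observation~\ref{obs:leftmost-acyclic} implies $\fvec$ is acyclic, and standard flow decomposition writes $\fvec=\sum_i f_i\,\mathbf{1}_{P_i}$ with each $P_i$ a simple $s$-to-$t$ path and $f_i>0$.  If any two paths cross at a vertex $v$, I would swap their outgoing darts at $v$: the flow on each dart is preserved, and iterating (with a potential such as total enclosed area) yields a pairwise non-crossing decomposition.  Non-crossing $s$-to-$t$ paths sharing endpoints on $f_\infty$ are totally ordered by the left-of relation, so I would reindex $P_1,\ldots,P_\ell$ left-to-right.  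Uniqueness then follows inductively: $P_1$ must trace the left boundary of $\operatorname{supp}(\fvec)$ and $f_1=\min_{d\in P_1}\fvec[d]$ is forced, after which $\fvec-f_1\mathbf{1}_{P_1}$ is itself leftmost and acyclic and the argument recurses.

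For the augmenting-paths statement, I would induct on $i$ with the hypothesis that after iteration $i-1$ the current flow equals $\fvec^{(i-1)}=\sum_{j<i}f_j\,\mathbf{1}_{P_j}$.  I would first prove the key sublemma that $\fvec^{(i-1)}$ is itself a leftmost flow of its value, so that $\cvec_{\fvec^{(i-1)}}$ is \cw acyclic.  Granting this, let $Q$ be the leftmost residual $s$-to-$t$ path at iteration $i$; since $P_i$ remains residual in $\cvec_{\fvec^{(i-1)}}$ (its forward darts retain residual at least $f_i$), $Q$ is either $P_i$ or strictly left of $P_i$.  In the latter case, by definition of left-of, $Q\circ\rev{P_i}$ is a \cw closed walk whose darts are all residual in $\cvec_{\fvec^{(i-1)}}$, so it contains a residual simple \cw cycle, contradicting the sublemma.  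Hence $Q=P_i$, and its bottleneck in $\cvec_{\fvec^{(i-1)}}$ must equal $f_i$ because $P_i$ contains a dart saturated by $\fvec$ that is shared by no $P_j$ with $j>i$---an ``edge unique to $P_i$'' whose existence is guaranteed by the canonical non-crossing structure.

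The main obstacle I expect is the sublemma that each $\fvec^{(i-1)}$ stays leftmost: with asymmetric capacities, pushing $f_j$ along $P_j$ raises the residual of each reverse dart $\rev{d}$ for $d\in P_j$, so a \cw cycle that was non-residual in $\cvec$ solely because of such a $\rev{d}$ could in principle become residual.  Ruling this out combines the fact that $\fvec$ itself kills every \cw cycle with the left-to-right ordering---any freed reverse dart lies strictly to the left of $P_i,\ldots,P_\ell$ and so cannot close a residual \cw cycle through the still-unpushed part of $\fvec$---and making this bookkeeping precise is where the real work of the proof lies.
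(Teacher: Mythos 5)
The paper itself offers no proof of this observation: it simply invokes Observation~\ref{obs:leftmost-acyclic} (acyclicity of a leftmost flow, credited to~\cite{BK09}) and declares the decomposition ``easy.'' Your first half---acyclicity, standard flow decomposition, uncrossing, total order by ``left of'' for non-crossing paths sharing endpoints on $f_\infty$---matches that intent and is fine. The trouble is in the augmenting-paths half, and it is a concrete gap beyond the one you already flag. You argue that if the leftmost residual path $Q$ were strictly left of $P_i$, then $Q\circ\rev{P_i}$ would be a \cw closed walk ``whose darts are all residual in $\cvec_{\fvec^{(i-1)}}$.'' That is false in general: for a dart $d$ of $P_i$ not used by any $P_j$ with $j<i$, $\fvec^{(i-1)}[d]=0$ and hence $\cvec_{\fvec^{(i-1)}}[\rev{d}]=\cvec[\rev{d}]$, which may be $0$ since capacities are asymmetric (and the preprocessing even introduces $0$-capacity darts). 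The flow $\fvec^{(i-1)}$ has not yet pushed anything along $P_i$, so the reverse darts of $P_i$ need not be residual, and the claimed \cw residual cycle does not materialize. Your argument would only go through \wrt $\cvec_\fvec$ (where $\rev{P_i}$ \emph{is} residual), but $Q$'s residuality is only known \wrt $\cvec_{\fvec^{(i-1)}}$, so the two do not combine.

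You are candid that the sublemma ``$\fvec^{(i-1)}$ is leftmost'' is the real work, and indeed both that sublemma and the $Q=P_i$ step are most cleanly obtained not by local residual-cycle surgery but by going through the dual-distance characterization the paper is built on (Theorem~\ref{thm:cwsat}/Hassin): with $\dvec$ the $a_\infty^*$-to-$\cdot$ shortest-path distances in $G^*_{st}$, the capped potential $\dvec^{(i-1)}[x^*]=\min\bigl(\dvec[x^*],\,\sum_{j<i}f_j\bigr)$ is again a feasible shortest-path potential, the flow it induces via Equation~(\ref{eq:st-flow}) is exactly $\fvec^{(i-1)}$, its \cw acyclicity is automatic, and $P_i$ is the (uniquely determined) boundary of the level set $\{x^*:\dvec[x^*]\ge\sum_{j<i}f_j\}$, so it must be the leftmost residual path and its bottleneck is the gap to the next distinct distance value, namely $f_i$. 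That route also gives you uniqueness of the decomposition and the ``saturated dart unique to $P_i$'' for free, which your sketch currently asserts rather than proves. So: right skeleton, but the \cw-cycle contradiction as written does not hold, and the route through dual distances is what actually closes the argument.
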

Our algorithm requires \cw acyclic capacities; the analysis will use
this fact indirectly by invoking Observation~\ref{obs:path-decomp}.
We will achieve this property in a preprocessing step and maintain
this as an invariant throughout the algorithm.  It follows from
Equation~(\ref{eq:st-flow}) and Observation~\ref{obs:path-decomp}
that, for every primal face $x$ (dual vertex $x^*$):
\begin{equation}
  \label{eq:distances}
  \dvec[x^*] = \left\{
    \begin{array}[c]{ll}
      \sum_{j = 1}^i f_j & \text{if $x$ is right of $P_i$ and left of $P_{i+1}$} \\
      \sum_{j = 1}^\ell f_j = |\fvec| & \text{if $x$ is right of $P_\ell$}
    \end{array}\right.
\end{equation}

\subsection{$st$-planar flow via biased search} \label{sec:biased}

We describe how to find an $st$-planar flow via biased search (in the
dual) that does not necessarily search the entire graph, assuming that
the initial capacities are \cw acyclic.  We assume that there are no
degree-2 vertices in the primal; any such vertex could be removed by
merging the adjacent darts (in each direction) and keeping the minimum
of the capacities.  Parallel darts (not antiparallel) can be merged by taking the sum of
their capacities.  We additionally assume that the finite faces of the
primal are triangulated (which can be achieved by the addition of
0-capacity edges).

\begin{wrapfigure}{r}{0.5\textwidth}
  \vspace{-30pt}
  \centering
  \includegraphics[scale = 0.7]{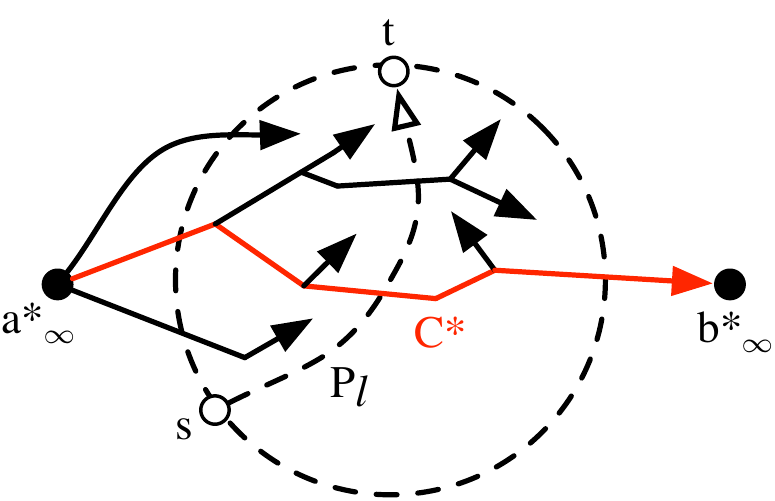}
  \caption{$\partial G$ is the dashed circle and the dashed $s$-to-$t$
    path is $P_\ell$.  In $G_{st}^*$, $a_\infty^*$ is incident to the
    duals of all the arcs on the path of the circle \cw from $s$
    to $t$ and $b_\infty^*$ is incident to the duals of all the arcs
    on the path of the circle \cw from $t$ to $s$.  The solid
    tree is the search tree used in the biased search algorithm with
    the $a_\infty^*$-to-$b_\infty^*$ path representing the leftmost
    cut $C^*$.}
  \label{fig:witness}
  \vspace{-30pt}
\end{wrapfigure}

We implicitly and iteratively build a decomposition as given in
Observation~\ref{obs:path-decomp} using Dijkstra's algorithm in the
dual.  Initially $P_1 = \partial G[s,t]$. In phase $i$, we have
already found path $P_i$; we maintain that, at the start of phase $i$,
the faces adjacent to and right of $P_i$ are in the queue $Q_i$. (Keep
in mind that faces are vertices in the dual, and we are really just
finding shortest paths in the dual graph, applying the standard rules
for Dijkstra's algorithm.)  The priority of face $x$ is the capacity
of the minimum-capacity dart bounding $x$ in $P_i$.  Say the minimum
priority in the queue is $q$; to find $P_{i+1}$ we pop faces off the
queue with priority $q$ until the minimum priority in the queue is
$>q$.  Now we have popped off all the faces between $P_i$ and
$P_{i+1}$ (by Equation~(\ref{eq:distances})) and $Q_{i+1}$ contains
all the faces to the right of and adjacent to $P_{i+1}$.

So far, we have just described Hassin's algorithm, but have made
explicit the augmenting paths that are implicit in his algorithm.  We
have also identified {\em phases}.  In each phase, all the faces of a
given distance label are explored via 0-length darts (in the
dual).

We modify the algorithm so that we do not explore the entire graph.
Note that all the faces to the
right of $P_\ell$ (the last augmenting path), by Equation~(\ref{eq:distances}) have distance label
$|\fvec|$.  Rather than label all these faces, after
getting to the start of phase $\ell$, we wish to find the {\em leftmost cut}.
Let $C^*$ be the leftmost, shortest
$a_\infty^*$-to-$b_\infty^*$ path in $G_{st}^*$; $C$ is the leftmost cut.  The part of
$C^*$ that is strictly to the right of $P_\ell$ consists of 0-length
darts, since the sum of the capacities of the darts in $C^*$ that are
in $P_1, \ldots, P_\ell$ is $|\fvec|$ by Equation~(\ref{eq:distances}).  In addition to identifying the
leftmost cut, we wish to not explore any part of the graph strictly
right of $P_\ell$ and $C^*$.  (See Figure~\ref{fig:witness}.)

We find the leftmost cut by at each phase additionally maintaining an ordering $A_i$ of the faces in
$Q_i$ that reflects their order along $P_i$ from $t$ to $s$.  We
maintain and query the ordering using the order maintenance data
structure {\sc DSOrder} due to Dietz and Sleator~\cite{DS87} which is
a circularly linked list with order information determined using 2's
complement arithmetic.  (See Appendix~\ref{app:order} for
details. Each of the operations takes either $O(1)$ or $O(\log n)$
time per visited face.)  During a phase, we:

(1) Start with faces that are closest to $t$ in the ordering.

(2) Explore along 0-length darts in the dual in a depth-first {\em
    leftmost} fashion; this can be done by following the combinatorial
  embedding of the darts around a vertex in a \cw order, using the
  parent dart in the search tree implicit to Dijkstra's
  algorithm~\cite{BK09}.

(3) If we reach $b_\infty^*$ during this search, we immediately
  stop the algorithm.  (More details of this are given below.)

(4) At the end of this 0-length exploration, we remove from the
  queue and order any faces that we have reached in this exploration.
  Suppose $T^*$ is the dual search tree we have explored that contains the shortest
  paths found by Dijkstra's algorithm, rooted at a face
  adjacent to $P_i$.  We add the never-visited faces adjacent to $T^*$ in
  their \cw order around $T^*$ (according to their shortest adjacency
  to $T^*$).  This ordering is easily visualized by contracting the
  edges of $T^*$ and considering the \cw ordering of the darts around
  the new (dual) vertex.

At the start of each phase, the queue and the order contain the same
set of elements. The leftmost-bias to the search additionally guarantees that the final dual
search tree $T^*$ contains leftmost shortest paths.  This can be
easily shown via induction.  Since we stop as soon as we reach
$b_\infty^*$ and we search in a leftmost fashion, $T^*$ does not
contain any darts {\em strictly} right of both the last flow path
$P_\ell$ found and $T^*[a_\infty^*,b_\infty^*]$.  In this way, we also
guarantee: 
\begin{observation} \label{obs:end} At the end of this biased search,
  the queue and order contain the faces adjacent to and right of
  $P_\ell$.  
\end{observation}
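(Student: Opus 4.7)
The plan is to prove Observation~\ref{obs:end} by induction on the phase number $i$, maintaining the stronger invariant already asserted in the algorithm description: at the \emph{start} of phase $i$, the queue $Q_i$ and order $A_i$ contain precisely the faces adjacent to and right of $P_i$, in their $t$-to-$s$ order along $P_i$. Given this invariant, the observation falls out essentially by inspection of the control flow. In the terminating phase $i = \ell$, the algorithm halts inside step (3) upon reaching $b_\infty^*$, and all modifications to $Q$ and $A$ take place in step (4); so the contents of $Q_\ell$ and $A_\ell$ at termination are identical to their contents at the start of phase $\ell$, which by the invariant is the set of faces adjacent to and right of $P_\ell$.

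For the base case $i = 1$, the initialization step seeds $Q_1, A_1$ with the right-faces of the darts of $P_1 = \partial G[s,t]$, in the $t$-to-$s$ order. For the inductive step ($i < \ell$), Equation~(\ref{eq:distances}) tells us that every face strictly between $P_i$ and $P_{i+1}$ has dual distance $\sum_{j \le i} f_j$, and that this common value equals the current minimum priority in $Q_i$. Hence the leftmost biased 0-length DFS of step (2), started from the faces of $Q_i$ in $A_i$-order, visits exactly this set of intermediate faces and produces a dual tree $T^*$ spanning them. In step (4) we remove these visited faces from $Q_i$ and $A_i$ and insert the never-visited faces adjacent to $T^*$ — precisely those reached from $T^*$ by a dart of strictly positive residual capacity — in their \cw order around the contracted $T^*$. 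Using planarity and the leftmost bias of the DFS, these are exactly the right-faces of the darts of $P_{i+1}$ in $t$-to-$s$ order, giving the invariant at the start of phase $i+1$.

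It also needs to be checked that the biased DFS in phase $\ell$ does reach $b_\infty^*$, so that we actually terminate in step (3) rather than inadvertently running step (4). This is guaranteed by the analysis preceding Figure~\ref{fig:witness}: the portion of the leftmost cut $C^*$ strictly right of $P_\ell$ consists of 0-length dual darts and connects a face of $Q_\ell$ to $b_\infty^*$; since the DFS is leftmost and $C^*$ is a leftmost shortest $a_\infty^*$-to-$b_\infty^*$ path, the search cannot route around $C^*$ without exiting the 0-length region, so $b_\infty^*$ is reached in this phase.

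The main obstacle is the inductive step, and specifically the planar argument identifying the \cw enumeration of never-visited neighbors around the contracted $T^*$ with the $t$-to-$s$ enumeration along $P_{i+1}$. This relies on the fact that $T^*$ is leftmost (from the leftmost bias of the DFS), that $P_{i+1}$ is the leftmost residual path in the band right of the settled region, and on the standard correspondence between the cyclic order of darts around a (contracted) dual vertex and the linear order of crossings along the path that separates the visited from the unvisited region. Everything else is a direct reading of the algorithm together with Equation~(\ref{eq:distances}).
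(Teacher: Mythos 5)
Your overall strategy — induct on phases to establish the invariant that $Q_i,A_i$ contain the faces adjacent to and right of $P_i$ at the start of phase $i$ — is exactly the invariant the paper's algorithm description asserts, and the paper itself offers no more formal a proof than you do (it simply appeals to the leftmost bias and the constraint that $T^*$ stays between $P_\ell$ and $T^*[a_\infty^*,b_\infty^*]$). So the approaches are aligned in spirit. The one place where I think you have glossed over something substantive is the claim that ``all modifications to $Q$ and $A$ take place in step~(4); so the contents of $Q_\ell$ and $A_\ell$ at termination are identical to their contents at the start of phase~$\ell$.'' That reading makes the leftmost bias irrelevant to the observation, yet the paper explicitly invokes the bias (``In this way, we also guarantee\ldots'') as the reason the observation holds, and the later discussion (``If we reach $b_\infty^*$ in the middle of a phase and have a subset of items $X$ that we have sorted\ldots'') shows the data structures are in fact being touched mid-phase. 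In the intended picture the DFS in phase~$\ell$ \emph{does} cause some faces near $t$ to be settled and new frontier faces along $T^*[a_\infty^*,b_\infty^*]$ to enter $Q$/$A$; what the leftmost bias and early termination buy is that these changes are confined to the thin strip between $P_\ell$ and the leftmost cut, so that every face adjacent to and right of $P_\ell$ remains available (either untouched in $Q_\ell$ or re-added as part of the frontier of $T^*$). Your argument would be airtight if the algorithm really deferred all queue/order updates to a single batch in step~(4); since the paper is ambiguous on this, you should either defend that reading explicitly or fall back to the paper's geometric argument about $T^*$ not straying strictly right of both $P_\ell$ and $T^*[a_\infty^*,b_\infty^*]$. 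Your final paragraph (checking that phase $\ell$ actually reaches $b_\infty^*$ via the $0$-length tail of $C^*$) is a good addition that the paper leaves implicit.
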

In our multi-source, multi-sink algorithm, we will reuse this queue
and order.  To do so, we need to know the residual capacities
of the darts in $P_\ell$.  If a face $f$ in the queue has exactly one
bounding arc in $P_\ell$, then the priority of $f$ reflects exactly
the residual capacity of that dart.  If $f$ has two bounding darts
$d_1$ and $d_2$ in $P_\ell$ (i.e., the head of $d_1^*$ and $d_2^*$ in
$G^*$ is $f^*$), then, {\em to the right of $P_\ell$}, we can only
push the minimum of these darts' residual capacities along this section of
$P_\ell$.  (Put another way, if we remove everything strictly to the
left of $P_\ell$, $d_1$ and $d_2$ would be incident to a degree 2
vertex, which we would remove according to the rule at the start of
this section.)  We get:
\begin{observation} \label{obs:res-cap}
  The priority of a face $f$ in the queue reflects the residual
  capacity of the dart(s) bounding the face in $P_\ell$; the residual
  capacity is the priority less $|\fvec|$.
\end{observation}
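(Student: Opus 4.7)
The plan is a direct calculation combining Equations~(\ref{eq:distances}) and~(\ref{eq:st-flow}) with the Dijkstra update rule at the boundary of the explored region. By Observation~\ref{obs:end}, each face $f$ remaining in the queue is right of and adjacent to $P_\ell$, and because the biased search halts on reaching $b_\infty^*$ and never pops a face strictly right of $P_\ell$, the priority of $f$ is realized as $\dvec[g^*]+\cvec[d]$ for some dart $d\in P_\ell$ with dual head $f^*$ and dual tail on an already-finalized face $g^*$ left of $P_\ell$. Equation~(\ref{eq:distances}) applied to $g$, which lies between $P_{\ell-1}$ and $P_\ell$, gives $\dvec[g^*]=\sum_{j=1}^{\ell-1}f_j=|\fvec|-f_\ell$.

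Next I would invoke Equation~(\ref{eq:st-flow}) together with Observation~\ref{obs:path-decomp}: since $d\in P_\ell$ and $P_\ell$ carries $f_\ell$ units of flow, $\fvec[d]=f_\ell$, so $\cvec_\fvec[d]=\cvec[d]-f_\ell$. Dijkstra records the minimum of the candidate tentative distances over the one or two darts of $P_\ell$ bounding $f$, so the priority of $f$ equals
\[
|\fvec|-f_\ell+\min_{d}\cvec[d]=|\fvec|+\min_{d}\cvec_\fvec[d],
\]
which, on subtracting $|\fvec|$, is exactly the observation's claim; in the two-dart case, this minimum residual capacity is precisely the bottleneck amount pushable through that section of $P_\ell$ from the right, as argued just before the observation.

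The one piece of care needed is to verify that no alternative popped parent can give $f$ a strictly smaller tentative distance—only then is the minimum over $d\in P_\ell$ the priority. This follows directly from step~(3) of the biased search: no face strictly right of $P_\ell$ has been popped, so every relaxation reaching $f^*$ must cross $P_\ell$ at a single dart, making the minimum above the correct priority. I do not expect any deeper obstacle; this is essentially a bookkeeping lemma that reads off the output of Dijkstra through Equations~(\ref{eq:distances}) and~(\ref{eq:st-flow}).
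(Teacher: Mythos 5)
Your plan---expressing $f$'s priority as a Dijkstra tentative distance and reading it off through Equations~(\ref{eq:distances}) and~(\ref{eq:st-flow})---is the natural way to make the paper's very informal justification precise, and your final identity (priority $=|\fvec|+\min_d\cvec_\fvec[d]$ over $P_\ell$-darts with dual head $f^*$) is the right thing to prove. However, two of your intermediate steps are false in general, and you reach the correct answer only because the two errors cancel. The popped neighbour $g$ with $g^*=\tail(d^*)$ need not lie between $P_{\ell-1}$ and $P_\ell$: the leftmost paths $P_1,\ldots,P_\ell$ are non-crossing but may share darts, so $g$ can sit between $P_j$ and $P_{j+1}$ for any $j<\ell$, giving $\dvec[g^*]=\sum_{k\le j}f_k$, not $|\fvec|-f_\ell$. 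Correspondingly, $\fvec[d]\ne f_\ell$ in general: when $d$ is shared by $P_{j+1},\ldots,P_\ell$ one has $\fvec[d]=\sum_{k=j+1}^{\ell}f_k\ge f_\ell$. The two slips cancel in the sum $\dvec[g^*]+\cvec[d]$, but neither is true on its own. The clean route avoids locating $g$ entirely: Equation~(\ref{eq:st-flow}) gives $\fvec[d]=\dvec[f^*]-\dvec[g^*]$, whence $\dvec[g^*]+\cvec[d]=\dvec[f^*]+(\cvec[d]-\fvec[d])=|\fvec|+\cvec_\fvec[d]$, using Equation~(\ref{eq:distances}) only for $f$.

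Separately, your closing claim that step~(3) of the biased search guarantees ``no face strictly right of $P_\ell$ has been popped'' misreads the algorithm. During the final phase the search does pop faces strictly right of $P_\ell$, namely those reached along 0-length darts between $P_\ell$ and $T^*[a_\infty^*,b_\infty^*]$; the algorithm stops only upon reaching $b_\infty^*$, not upon touching $P_\ell$. Such popped faces $g'$ also relax into $f$: for them $\dvec[g'^*]=|\fvec|$ and $\cvec[d']=\cvec_\fvec[d']$ (no flow is pushed right of $P_\ell$), so the tentative distance contributed is again $|\fvec|+\cvec_\fvec[d']$---still of the right form, but over a dart $d'$ not in $P_\ell$. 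A complete argument for the observation as stated must show these relaxations do not undercut the $P_\ell$-dart ones; the paper's own one-sentence justification glosses over this as well, but the reason you gave does not hold.
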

Subtracting $|\fvec|$ from the priorities in the ending queue can be
done in $O(1)$ time using offset queues (Appendix~\ref{app:pq}).
Finally, the {\sc DSOrder} data structure does not allow us to pull
the {\em first} element of the order (having minimum priority in the
queue) but does allow us to sort a subset of items.  In doing so, we
spend $O(\log n)$-amortized time per element.  We do not wish to
repeat this work.  If we reach $b_\infty^*$ in the middle of a phase
and have a subset of items $X$ that we have sorted using {\sc
  DSOrder}, we break the ties in the priorities of these items in the
priority queue.  When we return to use this queue/order, we will not
need to resort these items.

\section{Algorithm}
\label{sec:algorithm}

For simplicity of presentation we will assume that the terminals are
alternating sources and sinks along $\partial G$.  This can be
attained by taking a consecutive group of sources $S$, introducing a
new source and connecting the new source to every source in $S$ with
an infinite capacity arc.  We number the sources and sinks according
to their \cw ordering on $\partial G$, $s_1, t_1, s_2, t_2, \ldots,
s_m, t_m$, starting with an arbitrary source.  We return the
difference between the original capacities and final residual
capacities, which, by Equation~(\ref{eq:res}), is the corresponding
flow.

\begin{tabbing}\label{tab:algo-super-high-level}
  {\sc AbstractFlow} ($G$, $\{s_1, t_1, s_2, t_2, \ldots, s_m, t_m\}$, $\cvec$)\\
  \qquad Saturate all $s_j$-to-$t_i$ residual paths $\forall i < j$ and all \cw cycles.\\
  \qquad Let $\cvec_0$ be the resulting residual capacities.\\
  \qquad For $j =1, 2, \ldots, m$:\\
  \qquad \qquad for $i = j, j-1, \ldots, 1$:\\
  \qquad \qquad \qquad \= let $\cvec_{ij}'$ be the current residual capacities.\\
  \> Find the leftmost \xty{s_i}{t_j} flow $\fvec_{ij}$ \wrt $\cvec_{ij}'$.\\
  \> Let $\cvec_{ij}$ be the residual capacities of $\cvec_{ij}'$ \wrt $\fvec_{ij}$.\\
  \qquad Return $\cvec[d]-\cvec_{mm}[d]$ for all darts $d$.
\end{tabbing}

\begin{figure}
  \centering
  %show $s_1, t_m$, $x$, $P$, $t_ix$ and $xs_j$ 
  (a) \includegraphics[scale = 0.8]{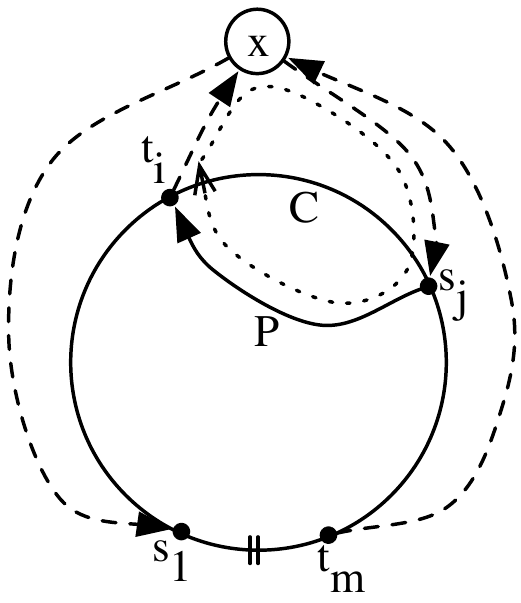}
  (b) \includegraphics[scale=0.8]{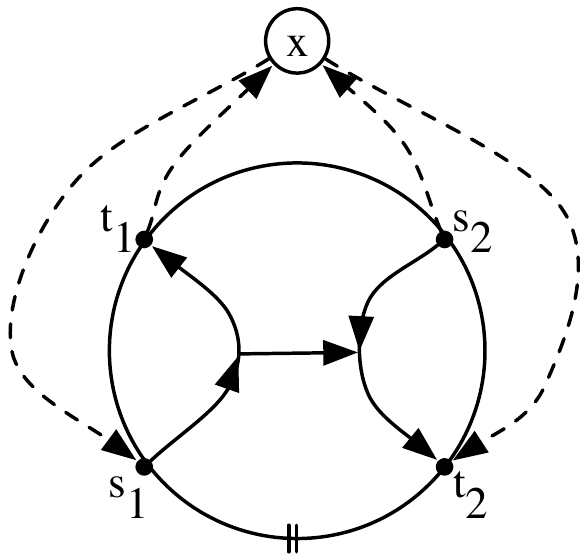}
  %\vspace{4cm}
  \caption{(a) Illustrating the first step of the {\sc AbstractFlow}.
    (b) A simple example illustrating why this first step cannot be
    repeated to find the overall maximum flow.  The equivalent step
    would saturate all \ccw cycles.  If the solid edges have equal
    capacity, this would saturate the $s_1$-to-$t_2$ path, since the
    method for saturating all \ccw cycles (like for \cw cycles)
    saturates all largest such cycles.  However, doing so would create
    a residual path from $s_2$ to $t_1$.}
  \label{fig:preprocess}
\end{figure}

The first step can be done with one shortest-path computation in the
dual as follows (in $O(n \log n)$ time using Dijkstra's algorithm, for
example); refer to Figure~\ref{fig:preprocess}(a).  Embed a vertex $x$
in $f_\infty$.  Connect $x$ to every source and every sink with
infinite-capacity arcs.  Embed these arcs so that $s_1$, $t_m$ and $x$ are
on the infinite face.  Let $\fvec$ be the circulation that saturates
all the \cw residual cycles in this graph (Theorem~\ref{thm:cwsat}).
Let $\cvec_0$ be the residual capacities of the darts in $G$ \wrt
$\fvec$.  Consider any simple path $P$ from $s_j$ to $t_i$ in $G$.
For $j > i$, $P \circ t_ix \circ xs_j$ is a \cw cycle $C$.  Therefore
$C$ must be non-residual \wrt $\cvec_0$ and, since the arcs $t_ix$ and
$xs_j$ have infinite capacity, $P$ must be non-residual \wrt
$\cvec_0$.

Note that while the iterative part of the algorithm saturates all
$s_i$-to-$t_j$ paths $\forall i < j$, we cannot achieve this with a
symmetric application of the first step.  The simple example in
Figure~\ref{fig:preprocess}(b) illustrates why.

In the remainder of the paper we will give an efficient implementation
of the double loop of {\sc AbstractFlow}.  We first show that
the abstract algorithm guarantees several useful invariants that limit
the region of the graph that is involved in each iteration.  These
invariants allow us to explore the graph in such a way that no region
is explored multiple times.  Correctness of {\sc AbstractFlow}
will also follow from these invariants.  
By iteration $i,j$, we will mean iteration $i$ of the inner loop and
iteration $j$ of the outer loop.  

\subsection{Invariants}
\label{sec:analysis}

Since only leftmost flows are augmented we get (by definition and induction):
\begin{invariant}
  \label{inv:no-cw-cycles}
  There are no clockwise residual cycles in $G$ \wrt $\cvec_{ij}, \forall i \le j$.
\end{invariant}

Since the sink is in common to all the iterations of the inner loop,
for a given iteration of the outer loop, we get:
\begin{invariant} \label{inv:inner-loop}
  There are no residual $s_j$-to-$t_k$ paths \wrt $\cvec_{i,k}$ for $j > i$.
\end{invariant}
More formally, this follows from the Sinks Lemma~\cite{BKMNW11}.
The following invariant shows that we do not undo the progress made by the first step of {\sc AbstractFlow}.

\begin{invariant}\label{inv:no-back-flow}
  There are no \xty{s_i}{t_j} residual paths s.t. $i > j$ \wrt
  $\cvec_0$ or $\cvec_{k{\ell}}$, $\forall k < \ell$.
\end{invariant}

\begin{proof}
  We prove this invariant by induction. It holds \wrt $\cvec_0$ as
  argued in Section~\ref{sec:max-flow-shortest}.  For a contradiction, let
  $\cvec_{k\ell}$ be the first residual capacities that introduce an
  \xty{s_i}{t_j} residual path $R$ ($i< j$).  Then there must be an
  \xty{s_k}{t_{\ell}} path $A$ that is augmented in iteration $k,\ell$
  and that uses a dart $d$ in $\rev{R}$.

  Let $x$ and $y$ be the last and first, resp., vertices of $R$ that
  are in $A$.  $A$, $R[s_i,y]$ and $R[x,t_j]$ are residual \wrt
  $\cvec_{k\ell}'$ (the residual capacities at the start of iteration $k,\ell$).  It follows that $k \le j$ and $\ell > i$, for
  otherwise we contradict the inductive hypothesis.  However,
  iteration $k,\ell$ comes after $i,\ell$ in {\sc
    AbstractFlow}.  Invariant~\ref{inv:inner-loop} tells us
  that there cannot be an $s_i$-to-$t_\ell$ path that is residual \wrt
  $\cvec_{k\ell}'$, contradicting the existence of $R[s_i,y] \circ
  A[y,t_\ell]$.  \qed
\end{proof}

The optimality of the flow found by {\sc AbstractFlow} follows from the last invariant (along with Invariants~\ref{inv:inner-loop} and~\ref{inv:no-back-flow}):
\begin{invariant} \label{inv:overall-progress}
  There are no $s_i$-to-$t_j$ residual paths \wrt $\cvec_{\ell k}$ for any $\ell$ and any $k > j$.
\end{invariant}

\begin{proof}
  We prove this invariant by induction. It holds \wrt $\cvec_{1,j+1}'$
  by Invariant~\ref{inv:inner-loop}. For a contradiction, let
  $\cvec_{\ell k}$ be the first residual capacities that introduce an
  \xty{s_i}{t_j} residual path $R$.  W.l.o.g.\ assume that $i \le j$
  as the case $i > j$ is handled by Invariant~\ref{inv:no-back-flow}.
  Then there must be an \xty{s_\ell}{t_k} path $A$ that is augmented
  in iteration $\ell,k$ and that uses a dart $d$ in $\rev{R}$.

  Let $x$ and $y$ be the first and last, resp., vertices of $R$ that
  $A$ shares.  Since $A$ and $R[y, t_j]$ are residual, $\ell \le j$ by
  Invariant~\ref{inv:no-back-flow}.  However, by
  Invariant~\ref{inv:inner-loop}, there are no $s_\ell$-to-$t_j$ paths
  that are residual \wrt $\cvec_{1j}$, so $\ell > j$, a contradiction. \qed
\end{proof}

\subsection{Unusability Structures}

We will illustrate our implementation of {\sc AbstractFlow} with a
recursive algorithm.  To that end, we show that the cut and the flow
found in iteration $i,j$ separates the graph into two pieces that act
independently for the remainder of the algorithm.  Let $P$ be the
rightmost path in the path decomposition of $\fvec_{ij}$ given in
Observation~\ref{obs:path-decomp} (that has non-zero flow).  The
following lemma allows us to delete everything strictly to the left of
$P$ at the end of iteration $i,j$ for future iterations without affecting optimality.

\begin{lemma} \label{lem:path-bdy}
  There are no paths from $s_k$ to $P$ that are residual \wrt $\cvec_{ij}$ for $k > i$.
\end{lemma}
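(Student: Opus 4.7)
My plan is to argue by contradiction: suppose a residual \xty{s_k}{P} path $R$ exists \wrt $\cvec_{ij}$ for some $k > i$, and convert it into a residual \xty{s_k}{t_j} path \wrt the residual capacities $\cvec_{ij}'$ at the start of iteration $(i,j)$. The invariants of Section~\ref{sec:analysis} then supply the contradiction.

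Let $P_1, \ldots, P_\ell$ with $P_\ell = P$ be the path decomposition of $\fvec_{ij}$ given by Observation~\ref{obs:path-decomp}. I would take $y$ to be the first vertex of $R$ (traversed from $s_k$) that lies on any of these paths, say $y \in P_{i^*}$, and set $R' = R[s_k, y]$. Every dart of $P_i$ or its reverse has both endpoints on $P_i$, while by the minimality of $y$ every interior vertex of $R'$ and the tail of its final dart lie off $P_1 \cup \cdots \cup P_\ell$. Hence $R'$ shares no dart with any $P_i$ nor its reverse, so the residual capacities along $R'$ agree under $\cvec_{ij}$ and $\cvec_{ij}'$ (these differ only on darts of the $P_i$ and their reverses), and $R'$ is residual \wrt $\cvec_{ij}'$. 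Since each $P_i$ carries strictly positive flow in $\fvec_{ij}$, every dart of $P_{i^*}$ had positive residual in $\cvec_{ij}'$; in particular $P_{i^*}[y, t_j]$ is residual \wrt $\cvec_{ij}'$. Concatenating gives a residual \xty{s_k}{t_j} path $Q := R' \circ P_{i^*}[y, t_j]$ \wrt $\cvec_{ij}'$.

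To finish, I would split on $i < j$ versus $i = j$. If $i < j$ then $\cvec_{ij}' = \cvec_{i+1, j}$: for $k \ge i+2$, Invariant~\ref{inv:inner-loop} (applied to iteration $(i+1, j)$) forbids $Q$, and for $k = i+1$, $Q$ would be an augmenting path for the maximum flow just computed in iteration $(i+1, j)$, again a contradiction. If $i = j$ then $k > j$, and $\cvec_{ij}'$ equals $\cvec_0$ (if $j=1$) or $\cvec_{1, j-1}$ (if $j \ge 2$); in either case Invariant~\ref{inv:no-back-flow} precludes a residual \xty{s_k}{t_j} path with $k > j$, contradicting $Q$.

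The step that really needs care is picking $y$ as the first touch of the \emph{entire} flow support $P_1 \cup \cdots \cup P_\ell$, rather than of $P = P_\ell$ alone: since $s_k$ for $i < k \le j$ sits to the left of every $P_i$, a hypothetical $R$ might well pierce several earlier flow paths before reaching $P_\ell$, and only this enlarged choice of $y$ keeps $R'$ clear of every augmented path so that its residuality transfers back to $\cvec_{ij}'$. Once this transfer is in hand, the invariants dispatch everything else.
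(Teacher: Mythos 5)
Your proof is correct and takes a genuinely different (and in places more explicit) route than the paper's. The paper first reduces inner iterations $j,j-1,\ldots,i$ to a single merged-source flow and invokes Observation~\ref{obs:path-decomp} for that merged flow, then splits on $i < k \le j$ (appealing somewhat informally to the left-to-right saturation order: the concatenation $Q \circ P[\mathrm{end}(Q),t_j]$ ``would be saturated before $P$'') versus $k > j$ (arguing geometrically that $Q$ lies strictly right of $P$, so $\fvec_{ij}$ leaves its capacities unchanged, so $Q$ is $\cvec_{ij}'$-residual, contradicting Invariant~\ref{inv:no-back-flow}). You instead work only with the decomposition of the single flow $\fvec_{ij}$ and split on $i<j$ versus $i=j$: by taking $y$ to be the first touch of the \emph{entire} support $P_1\cup\cdots\cup P_\ell$, you make the transfer of residuality from $\cvec_{ij}$ to $\cvec_{ij}'$ completely concrete (shared-dart argument) and uniform across all $k>i$, and then Invariant~\ref{inv:inner-loop} (or maximality when $k=i+1$) and Invariant~\ref{inv:no-back-flow} finish the job. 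This buys a tighter argument for the paper's first case at the cost of being a bit less geometric for the second; both are valid. One small point to flag: when $i=j=2$ you invoke Invariant~\ref{inv:no-back-flow} for $\cvec_{1,1}$, which is not literally within the invariant's stated range ``$\cvec_{k\ell},\ \forall k<\ell$''; the paper's later uses (e.g.\ in Lemma~\ref{lem:unusable-cuts}) treat the invariant as holding for all $k\le\ell$, so this appears to be a typo in the invariant's statement rather than a real gap, but it is worth noting.
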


\begin{proof}
  First we make an observation.  Inner iterations $j, j-1, \ldots, i$
  are equivalent to adding a new source $s$, connecting $s$ to $s_j,
  s_{j-1}, \ldots, s_i$ by high-capacity arcs and saturating the
  leftmost max $st_k$-flow\footnote{Note that in the implementation,
    we do not merge the sources in this way as doing so does not allow
    us to reuse the work done in previous iterations.}.  By
  Observation~\ref{obs:path-decomp}, this is done by saturating a set
  of non-crossing $s$-to-$t_k$ paths ${\cal P} = P_1, P_2, \ldots$
  ordered from left to right.  In {\sc AbstractFlow}, iteration
  $\ell,k$ will saturate a contiguous subset ${\cal P}_\ell$ of ${\cal P}$ for $i \le
  \ell \le j$.  By
  saturating these paths in order, we first cut $s_j$ from $t_k$ by saturating ${\cal P}_j$, then
  cut $s_{j+1}$ from $t_k$ and so on.

  For $i < k \le j$, the lemma follows from the fact that iteration
  $k,j$ precedes $i,j$: a path $Q$, from $s_k$-to-$P$ concatenated
  with the suffix of $P$, would be saturated before $P$.  For $k > j$,
  $Q$ would be residual \wrt capacities $\cvec_{ij}'$ since
  $\fvec_{ij}$ does not change the capacities of darts strictly to the
  right of $P$; $Q$ violates Invariant~\ref{inv:no-back-flow}.  \qed
\end{proof}

Let $C$ be the leftmost minimum $s_it_j$-cut.  The next lemma shows
that we can delete the darts in $C$ (among others on the $t_j$ side of
the cut) without affecting optimality.  In the biased-search
algorithm (Section~\ref{sec:biased}), the darts satisfying
Lemma~\ref{lem:unusable-cuts} are exactly those that are searched to the
right of the last flow path ($T^*$) in finding the leftmost cut ($C$).

\begin{lemma} \label{lem:unusable-cuts} Let $W^*$ be any
  from-$a_\infty^*$ $|\fvec_{ij}|$-length path in $G_{s_it_j}^*$ that
  is left of $C^*$.  Then no $s$-to-$t$ path that is residual w.r.t.\
  $\cvec_{ij}$ uses a dart in $W$.
\end{lemma}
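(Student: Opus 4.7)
The plan is to prove the stronger pointwise statement that every dart $d\in W$ has $\cvec_{ij}[d]=0$; since a dart of zero residual capacity cannot appear on any residual path, the lemma then follows immediately.

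The first step is algebraic. Let $\dvec$ denote the shortest-path distances in $G_{s_it_j}^*$ from $a_\infty^*$, using the pre-iteration residual capacities $\cvec_{ij}'$ as lengths. Equation~(\ref{eq:st-flow}), applied in $G_{s_it_j}^*$, gives $\fvec_{ij}[d]=\dvec[\head(d^*)]-\dvec[\tail(d^*)]$ for every dart $d$. Telescoping along $W^*$ (writing $z^*$ for its endpoint and using $\dvec[a_\infty^*]=0$) yields $\sum_{d\in W}\fvec_{ij}[d]=\dvec[z^*]$, and combining with the hypothesis that $W^*$ has length $|\fvec_{ij}|$ gives
\[
\sum_{d\in W}\cvec_{ij}[d] \;=\; |\fvec_{ij}|-\dvec[z^*].
\]
Every summand on the left is non-negative (residual capacities are non-negative and $\dvec$ satisfies the triangle inequality dart-wise), so the lemma reduces to proving $\dvec[z^*]\ge|\fvec_{ij}|$.

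The second step, which I expect to be the main obstacle, establishes $\dvec[z^*]=|\fvec_{ij}|$ using the hypothesis that $W^*$ lies left of $C^*$. I would argue by contradiction: if $\dvec[z^*]<|\fvec_{ij}|$, then by Equation~(\ref{eq:distances}) $z^*$ lies strictly between two consecutive flow paths $P_k$ and $P_{k+1}$ of the decomposition of $\fvec_{ij}$ with $k<\ell$. Fix a shortest $a_\infty^*$-to-$z^*$ path $W^{**}$, of length $\dvec[z^*]$. The key geometric idea is that $W^*$ reaches $z^*$ by a detour of length $|\fvec_{ij}|-\dvec[z^*]>0$ lying strictly left of $C^*$, which leaves enough planar room to route a continuation from $z^*$ to $b_\infty^*$ along the left boundary of that detour; concatenating $W^{**}$ with this continuation produces an $a_\infty^*$-to-$b_\infty^*$ path of length at most $|\fvec_{ij}|$ lying strictly left of $C^*$, contradicting the choice of $C^*$ as the leftmost shortest $a_\infty^*$-to-$b_\infty^*$ path.

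Step~1 is routine bookkeeping given Equation~(\ref{eq:st-flow}); the delicate part of the proof is the planar rerouting in Step~2, which must carefully invoke the leftmost bias of the biased search (Section~\ref{sec:biased}) and the structure of the decomposition in Observation~\ref{obs:path-decomp} to exhibit the offending path.
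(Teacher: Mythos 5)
Your Step~1 algebra is correct, but the approach as a whole has two problems: the key step is not actually carried out, and the strengthened claim you are trying to prove is almost certainly false.

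On the first point: the lemma has been reduced to showing $\dvec[z^*]=|\fvec_{ij}|$ for the endpoint $z^*$ of $W^*$, and you acknowledge that the ``planar rerouting'' needed to establish this is the delicate part that you have not filled in. In particular, it is not explained why there should be a short (0-length) continuation from $z^*$ to $b_\infty^*$ ``along the left boundary of the detour,'' nor why the resulting concatenation would be a path (rather than a walk that crosses itself) or would remain strictly left of $C^*$. Since this is exactly where all the content of the lemma lives, the proof is incomplete.

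On the second, more serious point: the pointwise statement ``every dart of $W$ has $\cvec_{ij}[d]=0$'' is stronger than the lemma and appears to be false. The lemma permits $W^*$ to be \emph{any} from-$a_\infty^*$ path of length $|\fvec_{ij}|$ left of $C^*$, and paths in this paper are not required to be simple. Consider a $W^*$ that first follows a shortest path to a face strictly right of $P_\ell$ (accumulating length $|\fvec_{ij}|$) and then crosses back over $P_\ell$ using a dual dart $e^*$ where $e=\rev{d}$ for some $d\in P_\ell$ with $\cvec_{ij}'[\rev{d}]=0$. This last dual dart has length $0$, so the total length is still $|\fvec_{ij}|$, and the path can stay left of $C^*$. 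But $\cvec_{ij}[\rev{d}]=\cvec_{ij}'[\rev{d}]+\fvec_{ij}[d]=\fvec_{ij}[d]>0$, so $\rev{d}\in W$ is residual. Your Step~2 would force $\dvec[z^*]=|\fvec_{ij}|$ here, which fails since $z^*$ is strictly left of $P_\ell$. The point is that the lemma is \emph{not} asserting that every dart of $W$ is saturated, only that no \emph{residual $s$-to-$t$ path} passes through $W$.

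The paper's proof takes a very different route that avoids this issue entirely: it assumes a residual $s_k$-to-$t_\ell$ path $R$ through a dart of $W$, uses Invariant~\ref{inv:no-back-flow} to conclude $s_k$ lies on the $t_j$ side of $C$ and to obtain a rightmost $s_kt_j$-cut $K$, shows $K^*$ is a $0$-length ccw cycle through $b_\infty^*$ lying left of $C^*$, and then splices $W^*[a_\infty^*,x^*]$ with $K^*[x^*,b_\infty^*]$ at a crossing point $x^*$ to contradict the leftmost-ness of $C^*$. The existence of $R$ is what produces the crossing point $x^*$; this localizes the contradiction to the relevant dart rather than trying to argue that \emph{every} dart of $W$ is dead. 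To salvage your approach you would have to restrict the lemma to shortest-path $W^*$'s (for which Step~2 holds by definition), but that would weaken the lemma in a way that the recursive algorithm's correctness argument does not obviously tolerate.
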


\begin{proof}%\marginpar{TODO: FIG?}
  For a contradiction, suppose there is a $s_k$-to-$t_\ell$ path $R$
  that is residual \wrt $\cvec_{ij}$ that uses a dart of $W$.  Since,
  by Invariant~\ref{inv:no-back-flow}, $\ell \ge k$, $s_k$ must be on
  the $t_j$ side of $C$ for otherwise, $R$ would have to cross back
  and forth across $C$, but the darts of $C$ are only residual \wrt
  $\cvec_{ij}$ from the $t_j$ side to the $s_i$ side.

  We have just finished iteration $i,j$, $k > j$, and so, by
  Invariant~\ref{inv:no-back-flow}, there is an $s_kt_j$-cut
  $K$. Take $K$ to be the {\em rightmost} of these cuts (defined
  analogously to leftmost).  In $G_{s_it_j}^*$, $K^*$ is a \ccw cycle
  through $b_\infty^*$; $K^*$ is 0-length (or, equivalently, 
  composed entirely of darts that are non-residual \wrt $\cvec_{ij}$).

  $K^*$ must be left of $C^*$, for otherwise, the leftmost-ness of
  $C^*$ and the rightmost-ness of $K^*$ would be violated.  If $R$
  uses a dart $d$ of $W$, then $d$ must be on the $s_k$ side of $K$.
  Then, in the dual, $W^*$ must intersect $K^*$ at a dual vertex
  $x^*$.  But then $W^*[a_\infty^*,x^*] \circ K^*[x^*,b_\infty^*]$ is
  a $a_\infty^*$-to-$b_\infty^*$ path of length at most that of $W^*$;
  $W^*[a_\infty^*,x^*] \circ K^*[x^*,b_\infty^*]$ is left of $C^*$,
  contradicting that $C$ is a leftmost cut.\qed

 \end{proof}

Lemmas~\ref{lem:path-bdy} and~\ref{lem:unusable-cuts} allow us to
implement {\sc AbstractFlow} recursively.  That is, {\sc
  AbstractRecursiveFlow}, below, finds the same (non-zero) flows
$\fvec_{ij}$ in the same order as {\sc AbstractFlow}.  The recursive
algorithm has a slightly different input, as there may be several
consecutive sources for the recursive calls.  We illustrate the
algorithm without explicitly returning the flow.  It is trivial to
determine the flow from the residual capacities found throughout the
algorithm.
\newpage

{\footnotesize \begin{tabbing}
  {\sc AbstractRecursiveFlow}($G$, $\{s_1, t_1, \ldots, s_m, t_m\}$, $\cvec$)\\
  \qquad \= Saturate all $s_j$-to-$t_i$ residual paths $\forall i < j$ and all \cw cycles.\\
  \> Let $\cvec_0$ be the resulting residual capacities. \\
  \> {\sc AbstractRecursiveFlowHelper} ($G$, $\{\}$, $\{s_1, t_1, \ldots, s_m, t_m\}$, $\cvec_0$)
 \end{tabbing}

%\newpage
\begin{tabbing}
  {\sc AbstractRecursiveFlowHelper}($G$, $\{s_1, s_2, \ldots, s_{\ell-1}$\}, $\{s_\ell, t_\ell, s_{\ell+1}, t_{\ell+1}, \ldots, s_m, t_m\}$, $\cvec$)\\
  \qquad\= Find the leftmost $s_\ell$-to-$t_\ell$ flow $\fvec$ \wrt $\cvec$. \\
  \> Let $\cvec'$ be the residual capacities of $\cvec$ \wrt $\fvec$. \\
  \> Let $P$ be the rightmost path in the path-decomposition of $\fvec$ and let $C$ be the leftmost cut. \\
  \> Let $G_1$ and $G_2$ be the components resulting from deleting all the darts \\
  \> \qquad \= \qquad strictly to the left of $P$ and the darts of $C$ from $G$. \\
  \> If $t_\ell \in G_2$: \\
  \> \> Let $k$ be the greatest index s.t.\ $t_k \in G_2$. \\
  \> \> {\sc AbstractRecursiveFlowHelper}($G_2$, $\{\}$, $\{s_{\ell+1}, t_{\ell+1}, \ldots, s_k, t_k\}$, $\cvec'$) \\
  \> \> Let $h$ be the smallest index $> \ell$ s.t.\ $t_h \in G_1$. \\
  \> \> Extend $Q$ and $A$ to contain all the faces in $G^*_{1,s_{\ell}t_h}$ that are incident to $a_\infty^*$ \\
  \>\> {\sc AbstractRecursiveFlowHelper}$(G_1, \{s_1, s_2, \ldots,
  s_{\ell}\}$, $\{s_h, t_h,\ldots, s_m,
  t_m\},\cvec')$\\
  \> Else: \\
  \> \> Let $j$ be the greatest index $<\ell$ s.t.\ $s_j \in G_1$.\\
  \>\> {\sc AbstractRecursiveFlowHelper}($G_1$, $\{s_1, s_2, \ldots,
  s_j\}$, $\{s_{\ell+1}, t_{\ell+1}, \ldots, s_m, t_m\}$, $\cvec'$) \\
\end{tabbing}}

\begin{wrapfigure}{r}{0.55\textwidth}
  \vspace{-30pt}
  \centering
  \includegraphics[scale=0.7]{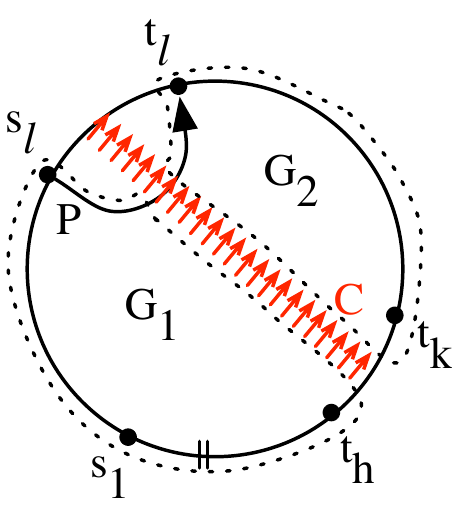}
  \includegraphics[scale=0.7]{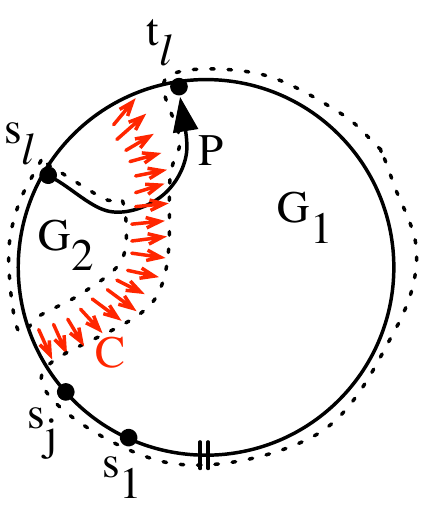}
  \caption{The two cases for subproblems for {\sc
      (Abstract)RecursiveFlow}. If $t_\ell \in G_2$ (left), there are 2
    non-trivial subproblems.}
  \label{fig:subprob}
  \vspace{-30pt}
\end{wrapfigure}

\begin{lemma}
  {\sc AbstractRecursiveFlow} implements {\sc AbstractFlow}.
\end{lemma}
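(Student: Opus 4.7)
\medskip\noindent\emph{Plan of proof.} I would proceed by induction on the number of source-sink pairs remaining to be processed, identifying the sequence of leftmost flows computed by the recursive helper with the subsequence of non-zero-flow iterations of {\sc AbstractFlow}. The preprocessing step is identical in the two algorithms and produces the same $\cvec_0$, so it suffices to compare the nested loop of {\sc AbstractFlow} with the helper tree.

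For the inductive step, the top of the current helper call finds the leftmost $s_\ell$-to-$t_\ell$ flow with respect to the current residual capacities $\cvec$, which matches iteration $(\ell,\ell)$ of {\sc AbstractFlow} since both algorithms see the same residual capacities at this point by the inductive hypothesis. After this augmentation, Lemmas~\ref{lem:path-bdy} and~\ref{lem:unusable-cuts} tell us that every dart strictly left of $P$ and every dart of $C$ is unusable by any residual path that might arise in a subsequent iteration of {\sc AbstractFlow}. Hence every future residual $s_i$-to-$t_j$ path lies entirely in $G_1$ or entirely in $G_2$, and any iteration of {\sc AbstractFlow} whose source and sink end up in different components produces zero flow and leaves the residual capacities unchanged; these are exactly the iterations the helper skips.

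The crux is then to check that the non-zero-flow iterations {\sc AbstractFlow} performs after $(\ell,\ell)$ occur in an order matching the two recursive calls of the helper. Using that terminals appear clockwise on $\partial G$ in the numbering $s_1,t_1,\ldots,s_m,t_m$, one verifies that when $t_\ell\in G_2$ the terminals captured in $G_2$ form a contiguous block $s_{\ell+1},t_{\ell+1},\ldots,s_k,t_k$ around $t_\ell$, while the remaining terminals in $G_1$ have indices $\ge h>k$; consequently all $G_2$-iterations of {\sc AbstractFlow} have outer index $j\le k$, whereas all $G_1$-iterations with $j>\ell$ have $j\ge h$, so the outer-$j$ ordering performs the $G_2$-iterations first and the $G_1$-iterations after, exactly matching the two recursive calls. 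Within each subproblem the remaining terminals are again consecutive in the clockwise order on the boundary of the subgraph and Invariants~\ref{inv:no-cw-cycles}--\ref{inv:overall-progress} continue to hold, so the inductive hypothesis applies. The symmetric ``else'' case is handled analogously: $G_2$ then contains only terminals whose indices were already exhausted by earlier helper calls, so no iterations remain for $G_2$ and a single recursive call on $G_1$ suffices.

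The main obstacle I expect is making the zero-flow claim rigorous for \emph{every} iteration skipped by the helper. In particular, the iterations $(s_i,t_\ell)$ with $i<\ell$ that {\sc AbstractFlow} would perform next must be shown to be trivial: when $s_i$ lies on the $s_\ell$ side of $C$, the cut $C$ (whose darts are saturated by the just-computed flow from the $s_\ell$ side) separates $s_i$ from $t_\ell$ and so no residual $s_i$-to-$t_\ell$ path exists; when $s_i$ and $t_\ell$ lie in different components of the deletion, any such path would have to use an unusable dart and is forbidden by Lemma~\ref{lem:unusable-cuts}. Carefully matching up the indices of skipped iterations with these two reasons, and confirming that the recursion descends into every non-zero-flow subproblem, is the bookkeeping the proof must carry out.
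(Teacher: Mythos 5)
Your proposal follows the same broad outline as the paper's proof — invoke Lemmas~\ref{lem:path-bdy} and~\ref{lem:unusable-cuts} to justify that the darts deleted between recursive calls are safe to remove, then match the sequence of $s_\ell t_\ell$-augmentations produced by the recursion against the iterations of {\sc AbstractFlow}. Your explicit induction and the bookkeeping about contiguous terminal blocks in $G_1$/$G_2$ are reasonable elaborations of the paper's terser argument.

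However, there is a genuine gap in the ``main obstacle'' paragraph, and it is the part you flag as the crux. You claim that \emph{every} iteration $(s_i,t_\ell)$ with $i<\ell$ skipped by the helper must be zero-flow, and you dispose of them via two cases: $s_i$ on the $s_\ell$ side of $C$, or $s_i$ and $t_\ell$ in different components after deletion. Both of those cases amount to $s_i$ and $t_\ell$ being separated. What you do not address is the remaining case in which $s_i$ and $t_\ell$ end up in the \emph{same} component $G_1$ — this is exactly the ``else'' branch, $t_\ell\notin G_2$. There the leftmost $s_\ell t_\ell$-cut wraps past some earlier sources, so $t_\ell$ is not separated from $s_1,\ldots,s_j$, and a residual $s_i$-to-$t_\ell$ path can survive. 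The paper's own proof explicitly acknowledges this: ``If there are residual source-to-$t_\ell$ paths remaining after this augmentation, then there would necessarily be one such path from $s_1$, and $t_\ell\notin G_2$; {\sc AbstractRecursiveFlowHelper} would continue to push flow from earlier sources to $t_\ell$.'' In other words, in the else branch the skipped inner-loop iterations of {\sc AbstractFlow} are \emph{not} all trivial, and the argument has to show how the recursive call on $G_1$ accounts for that remaining flow rather than claim it is zero. Your two-case dichotomy is precisely a blanket zero-flow claim, so the induction as you outline it would break on the else branch. You would need to either (i) show that the residual $s_i$-to-$t_\ell$ flow in $G_1$ is recovered by subsequent helper calls (the paper's route, which requires arguing that the helper's treatment of $t_\ell$ in $G_1$ mirrors the remaining inner-loop iterations), or (ii) strengthen the structural claim to rule out this configuration — which the paper does not do, because it can occur.
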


\begin{proof}
  Refer to Figure~\ref{fig:subprob}.  By Lemmas~\ref{lem:path-bdy}
  and~\ref{lem:unusable-cuts}, the deleted edges are {\em safe} to
  remove: solving the problem in the two subproblems will indeed find
  an optimal solution.  The $s_\ell t_\ell$ augmentation performed by
  {\sc AbstractRecursiveFlowHelper} corresponds to an iteration of
  {\sc AbstractFlow}.  If there are residual source-to-$t_\ell$ paths
  remaining after this augmentation, then there would necessarily be
  one such path from $s_1$, and $t_\ell \notin G_2$.  {\sc
    AbstractRecursiveFlowHelper} would continue to push flow from
  earlier sources to $t_\ell$, just as {\sc AbstractFlow}.  Otherwise,
  both {\sc Abstract}- and {\sc AbstractRecursive-Flow} would move onto
  the next sink, in which case $t_\ell \in G_2$.  \qed
\end{proof}

\subsection{Reusing queues for an efficient implementation}

We show how to implement {\sc AbstractRecursiveFlow} using $O(n)$
queries to simple data structures: the priority queue and {\sc
  DSOrder} data structure (which is at heart a linked list).  The
challenge in doing so can be illustrated by a simple example.  Suppose
$s_1$ has a high-capacity path $P$ with many edges ending with a
low-capacity star that connects to each of the sinks.  In each
iteration of the outer loop, we could require augmenting the flow
along this long path.  We overcome this barrier by reusing the work
from earlier iterations in later iterations.

We give an implementation ({\sc RecursiveFlow}) of {\sc
  AbstractRecursiveFlowHelper}.  To implement the first step of {\sc
  AbstractRecursiveFlowHelper}, we use the biased-search algorithm
described in Section~\ref{sec:biased}.  Note that the subproblem
corresponding to terminal sets $\{s_1, s_2, \ldots, s_{\ell-1}\}$, $\{s_\ell, t_\ell,
s_{\ell+1}, t_{\ell+1}, \ldots, s_m, t_m\}$ results from having found
maximum flows from $s_1, s_2, \ldots, s_{\ell-1}$ to $t_\ell$.  We
keep the queue and order at the end of the biased-search algorithm
used to find these flows.  

Formally, we will pass to {\sc RecursiveFlow} a queue and order for
each source $s_i$, $i \le \ell$.  The queue $Q_i$ and order $A_i$
contains all the faces adjacent to and right of $\partial
G[s_i,s_{i+1}]$ for $i < \ell$ and $\partial G[s_i,t_i]$ for $i =
\ell$.  The order reflects the \ccw ordering of the faces along
$\partial G$.  The priority of a face $f$ in $Q_i$ is the current
residual capacity of the primal copy of the dart $f_\infty^*f^*$. Recall from Section~\ref{sec:biased} that the biased-search algorithm guarantees this at the end of the search.

{\footnotesize
\begin{tabbing}
  {\sc RecursiveFlow}$(G, \{s_1, s_2, \ldots, s_{\ell-1}\}$, $\{s_\ell, t_\ell, s_{\ell+1}, t_{\ell+1}, \ldots, s_m, t_m\}$, $\{(Q_1,A_1), \ldots, (Q_\ell,A_\ell)\})$\\
  1\qquad \= Find the leftmost $s_\ell t_\ell$-flow $\fvec$ via biased-search using $Q_\ell, A_\ell$ as the starting queue, order.\\
  2\> Let $P$ be the rightmost path in $\fvec$ and let $T^*$ be the search tree. \\
  3\> Let $Q, A$ be the queue and order at the end of this search. \\
  4\> Subtract $|\fvec|$ from the priorities in $Q$. \\
  5\> Delete everything to the left of $P$ in $G$.  \\
  6\> Delete from $G$ the darts in $T^*$ that are left of $P$ creating components \\
  \> \qquad \= \qquad \qquad $G_1$ (that contains $s_1$) and $G_2$.\\
  7\> If $t_\ell \in G_2$: \\
  8\> \> Initialize the queue $Q_{\ell+1}$ and ordering $A_{\ell+1}$ of the dual vertices \\
  \> \> \qquad \qquad \qquad adjacent to $a_\infty^*$ in $G^*_{s_{\ell+1}t_{\ell+1}}$ \\
  9\> \> Let $k$ be the greatest index s.t.\ $t_k \in G_2$. \\
  10\> \> {\sc RecursiveFlow}$(G_2, \{\}, \{s_{\ell+1}, t_{\ell+1}, \ldots, s_k, t_k\}, \{(Q_{\ell+1},A_{\ell+1})\})$\\
  11\> \> Let $h$ be the smallest index $> \ell$ s.t.\ $t_h \in G_1$. \\
  12 \> \> Extend $Q$ and $A$ to contain all the faces in
  $G^*_{1,s_kt_h}$ that are incident to $a_\infty^*$ \\
  \>\> \qquad \qquad not currently in $Q/A$ with the appropriate priority/order.\\
  13\>\> {\sc RecursiveFlow}$(G_1, \{s_1, s_2, \ldots, s_{\ell}\}$,
  $\{s_h, t_h, \ldots, s_m, t_m\}, \{(Q_1,A_1), \ldots, (Q_{\ell},A_{\ell}), (Q,A)\})$\\
  14 \> Else: \\
  15 \> \> Let $j$ be the greatest index $<\ell$ s.t.\ $s_j \in G_1$.\\
  16 \>\> Extend $Q_j$ to $Q$ and $A_j$ to $A$, adding the missing
  faces in $G^*_{1,s_jt_k}$ that are incident to $a_\infty^*$.\\
  17\>\> {\sc RecursiveFlow}$(G_1, \{s_1, s_2, \ldots, s_j\}$, $\{s_{\ell+1}, t_{\ell+1}, \ldots, s_m, t_m\}, \{(Q_1,A_1), \ldots, (Q_j,A_j)\})$
 \end{tabbing}}

\paragraph{Running time and correctness of {\sc RecursiveFlow}}

By Observation~\ref{obs:res-cap}, Step 4 results in the priorities
reflecting exactly the residual capacities of the darts in $P$ after
saturating $\fvec$.  $G_1$ and $G_2$ are the same as the subgraphs
created in {\sc AbstractRecursiveFlow}, as are the subproblems
considered.  The removed darts create a new boundary and so maintain
triangulation of the finite faces.  Step 12 can be done in
$O(\log n)$ per new face added (Appendices~\ref{app:pq}
and~\ref{app:order}).  Adding the faces can be achieved by a left-first search
from $Q$ (or from $Q_j$ to $Q$); this creates the queue and order along the boundary
of the graph.  In order to combine the orders $A_j$ and $A$ in line 16, we observe that the order $A_j$ is guaranteed to be right of the order $A$ when they are joined together.  The {\sc DSOrder} data structure allows us to concatenate these orders efficiently (details in Appendix~\ref{app:order}).

Finally, we argue that the entire algorithm requires only $O(n)$
queries to priority queue and {\sc DSOrder} data structure.
The biased-search algorithm uses $O(k)$ priority-queue and {\sc
  DSOrder} queries where $k$ is the size of the search tree discovered
(Section~\ref{sec:biased}).  This is in part due to the triangulation
of the finite faces; the degree of the vertices from which we search
during the biased-search algorithm have degree 3, so the 0-length
darts leaving a vertex can be determined in constant time.

For the subproblem $G_1$, we start with queues that have already been
initialized, so, as argued at the end of Section~\ref{sec:biased}, we
essentially pick up the search where we left off, not repeating any
computation at the boundary where we left off (the rightmost path in a
previous flow).  For the subproblem $G_2$, $P$ forms part of the
boundary and so part of the queue/order ending at $t_\ell$ appear in
this subgraph.  However, by Lemma~\ref{lem:path-bdy}, no residual path
intersects $P$.  Since the finite faces are triangulated, no path can
intersect a face adjacent to $P$ without intersecting $P$.  Therefore,
none of the faces in the queue/order along $P$ will be used in the
subproblem corresponding to $G_2$.  It follows that there are a
constant number of data-structure queries per finite face of the
original graph.
{\let\thefootnote\relax\footnotetext{{\em Acknowledgements:} This material is based upon work supported by
the National Science Foundation under Grant No.\ CCF-0963921.}}

\appendix

\section{Priority queues with offsets} \label{app:pq}

We show how to efficiently change all the priorities in a queue by a
fixed amount.  This will be used when we wish to merge two priority
queues whose relative priorities differ by a constant.  That is, we
have two priority queues $P$ and $Q$ that we want to merge, but the
priorities of the items in $P$ are {\em offset} from those in $Q$ by
some amount $o$.  We illustrate this for a
binomial-heap implementation of priority queues, but this technique is
not limited to a specific implementation (although the details of
handling the offsets will depend on the implementation).

For the purposes of this discussion the details of a binomial heap,
beyond the fact that it is a set of rooted trees, are irrelevant.  We
refer the reader to any data structures textbook for details.  We will argue
that the standard operations (insert, find minimum, delete minimum,
decrease key and merge) will have the same asymptotic running time
with offsets as without.  To do so, we annotate the edges of the trees
in the heap with weights, initially zero.  We give the roots of the
trees a dummy parent edge so that every item in the queue (node in a
tree) $x$ has a unique parental edge weight $w(x)$.  We say that node
$x$ has a local priority $p_\ell(x)$ and a global priority $p(x)$
where $p(x)$ is the sum of $p_\ell(x)$ plus the parental edge weights
on the path to the root of the binomial tree containing $x$ (including
the weight of the dummy root edge).
Initially the global priorities are the same as the local priorities.
We will maintain that the heap property holds for the
global priorities (ie.\ my children's global priorities are lower than
mine).

We describe the modifications we make to the binomial-heap-based
priority queue operations:
\begin{description} 
  \item[insert] Unchanged as insert reduces to merge.
  \item[find min] The minimum priority element is guaranteed to be a
    root of one of the trees.  When comparing the roots of the trees,
    first sum the local priority and dummy root edge weight.
  \item[delete min] The standard operation is to delete the root that
    is the minimum priority element and then merge the resulting child
    trees with the remaining trees.  We first add the weight of the
    dummy root edge to the weights of the child edges; these child
    edges become dummy parent edges of the trees before they are
    merged.
  \item[decrease key] The standard operation traverses the path from
    the node in question, $x$ to the root and swaps nodes that violate
    the heap property.  First compute the global priorities of the
    nodes on the $x$ to root path. Then traverse to the to-root path:
    say $x$ is a child of $y$ such that $p(x) < p(y)$; let $w$ be the
    weight of the edge $xy$.  Swap $x$ and $y$, add $w$ to
    $p_\ell(x)$ and subtract $w$ from $p_\ell(y)$.
  \item[merge] If we want to merge heap $P$ with heap $Q$ in such a way that
    the priorities in $P$ are by an offset $o$ higher than those in $Q$, $o$ is added to the weight of the dummy root edges of $P$
    and in comparing the priorities of the roots of trees in $P$ to
    those in $Q$, the global priorities are used.  Merging binomial
    heaps is otherwise trivial.
\end{description}
We note that our modifications to not increase the asymptotic
complexity of the operations.  Although we do not need to maintain
local priorities for our algorithm, we point out that local priorities
can be retained.  However, in the decrease-key operation, the weight
of sibling edges would need to be modified as well, and, for binomial
heaps, would require $O(\log^2 n)$ time.

\section{Maintaining order} \label{app:order}

In order to maintain the left-to-right order of faces in the priority
queue we refer to an order maintenance data structure {\sc DSOrder} due to Dietz and
Sleator~\cite{DS87}. {\sc DSOrder} supports the following operations:

\begin{enumerate}
\item $Insert(X; Y)$: Insert a new element $Y$ immediately after
  element $X$ in the total order.
\item $Delete(X)$: Remove an element $X$ from the total order.
\item $Order(X; Y)$: Determine whether $X$ precedes $Y$ in the total
  order
\end{enumerate}

While there are other data structures that are
more efficient asymptotically~\cite{BCDFZ02}, {\sc DSOrder} is attractive for its
simplicity, as it only relies on basic data structures.  {\sc DSOrder}
is implemented as a circularly linked list that implicitly
encodes the label bits to represent paths in a hypothetical $2-4$ tree
and uses $2$'s complement arithmetic and a wrapping modulo to
efficiently perform renumbering, giving:

\begin{theorem}\cite{DS87}
  The amortized time to do {\em Insert} on a list containing $n$
  records  is $O(\log n)$, and the amortized (and  worst-case) time to do {\em Delete} or {\em Order} is $O(1)$. 
\end{theorem}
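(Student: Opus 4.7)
The plan is to analyze each of the three operations separately, attacking the two $O(1)$ bounds by direct implementation and reserving the amortized argument for $Insert$, which is the only operation that can trigger a global rebalancing.

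First, I would describe the representation: maintain a circular doubly-linked list of records, and store with each record $X$ an integer label $\ell(X)$ drawn from a range $\{0,1,\ldots,N-1\}$ where $N$ is a power of two chosen to be $\Theta(n^c)$ for some constant $c>1$. The labels are to be kept in cyclic order along the list, and $Order(X,Y)$ is answered by computing $\ell(Y)-\ell(X) \bmod N$ using ordinary two's-complement subtraction on a machine word, comparing against $N/2$. This is clearly $O(1)$. $Delete(X)$ merely splices $X$ out of the linked list via the usual pointer updates, also $O(1)$; we do not need to reclaim the label.

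For $Insert(X;Y)$, the first attempt is to give $Y$ a label midway between $\ell(X)$ and $\ell(X{\to}next)$ in the cyclic ordering. If the gap in labels between these two neighbors is at least $2$, the insertion is done in $O(1)$. Otherwise, we must relabel: starting from $X$ and walking outward along the list (in one direction, say the successor direction) we look for the smallest window of $k$ consecutive records whose total label span $s_k$ satisfies a density condition of the form $s_k/k \ge T^i$, where $T$ is a constant (e.g., $T=2$) and $i$ is the depth at which the window sits in an implicit weight-balanced $2$-$4$ tree over the label range. Once such a window is found, the labels inside it are redistributed evenly (equal gaps), creating room in particular next to $X$, and $Y$ is then inserted.

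For the running-time analysis, I would introduce a potential function that penalizes dense subranges; the classical choice is $\Phi = c\sum_{i} \Phi_i$, where $\Phi_i$ counts, over all $2^i$-sized aligned label windows, how close each window is to its maximum allowed occupancy. An insertion that does not relabel increases the occupancy of the $O(\log N)$ windows containing the new label by one and so costs $O(\log N)=O(\log n)$ in amortized terms. When a relabeling of a window of size $k$ occurs, its actual cost is $O(k)$, but the rebalancing drives the occupancy in that window strictly below its threshold, releasing enough potential to pay for the work done, leaving only $O(\log n)$ amortized cost per $Insert$. $Delete$ never increases $\Phi$, so its amortized and worst-case cost remain $O(1)$.

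The main obstacle, and the part that requires care, is showing that the telescoping of potential changes across the nested $2$-$4$-tree windows really does dominate the $\Theta(k)$ relabeling cost. This is where the choice of $T$ and the density thresholds matters, and it also forces $N$ to be polynomially larger than $n$ so that a window satisfying the density condition always exists (otherwise $Insert$ could fail). Once the thresholds are tuned so that at every level the amortized contribution telescopes to $O(\log n)$, the theorem follows, and one verifies that the arithmetic on labels (comparison, midpoint, even redistribution) stays within a single machine word under the two's-complement/modular interpretation used by the circular list.
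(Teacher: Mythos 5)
The paper does not prove this theorem; it is quoted from Dietz and Sleator \cite{DS87} and used as a black box, so there is no in-paper argument to compare yours against. Your sketch does capture the right architecture: the two's-complement wrap-around comparison for \emph{Order} and pointer-splice \emph{Delete} are genuinely $O(1)$ worst-case, and midpoint-or-relabel is the correct skeleton for \emph{Insert}. The gap is the amortization for \emph{Insert}, and it is exactly the step you flag yourself as ``the main obstacle'': you never define the potential $\Phi_i$ or the density thresholds precisely, and you never prove the charging inequality that evenly redistributing a window of $k$ records releases $\Omega(k)$ potential. That inequality \emph{is} the content of the $O(\log n)$ bound; asserting that ``once the thresholds are tuned\ldots the theorem follows'' is circular. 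Your stated condition $s_k/k \ge T^i$ also conflates two distinct formulations: Dietz--Sleator walk successors $j = 1, 2, \ldots$ from the insertion point and stop at the least $j$ whose label gap exceeds $j^2$ (no implicit tree needed), whereas the fixed-base-$T$, depth-indexed thresholds over aligned dyadic windows are from the later Bender et al.\ treatment \cite{BCDFZ02}. Either route can be carried through, but you need to commit to one and actually prove the release-versus-cost bound.

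A smaller omission: you fix $N = \Theta(n^c)$ once, but if the list grows beyond what that label range can accommodate, no window can satisfy the sparsity condition and the relabeling fallback fails. The data structure therefore requires an occasional global relabel into a larger label space (a standard doubling argument, $O(1)$ amortized per insert); without stating this, your claim that ``a window satisfying the density condition always exists'' is unjustified for an arbitrary insertion sequence.
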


{\sc DSOrder} generally draws its labels from integers in $\{0,\ldots, M - 1\}$,
where $M$ is sufficiently large\footnote{$M > n^2$, where $n$ is the
  size of the order.}. Since in our algorithm every face in a newly created order is {\em right of} the faces in the
previous order, we modify this range as we move left-to-right to make
simple concatenation possible. I.e. if $n_i$ is the largest label in
the order $A_i$, the labels for $A_{i+1}$ are drawn from
$\{n_i+1,\ldots, n_i+ M\}$, where $M$ is large \wrt the size of the
graph. Then, an order $B$ created after an order $A$, can be
appended to $A$ in constant time via standard linked list operations.

\end{document}